\newtheorem{theorem}{Theorem}
\newtheorem{definition}[theorem]{Definition}
\newtheorem{proposition}[theorem]{Proposition}
\newtheorem{remark}[theorem]{Remark}
\newenvironment{proof}[1][Proof]{\noindent\textbf{#1.} }{\ \rule{0.5em}{0.5em}}
\newcommand{\bpartial}{\mathop{\partial\kern -4pt\raisebox{.8pt}{$|$}}}
\newcommand{\bra}{\mathopen{[\kern-1.6pt[}}
\newcommand{\ket}{\mathclose{]\kern-1.5pt]}}
\newcommand{\bbra}{\mathopen{[\kern-2.2pt[\kern-2.3pt[}}
\newcommand{\bket}{\mathclose{]\kern-2.1pt]\kern-2.3pt]}}
\newcommand{\sle}{\mbox{\bfseries\slshape e}}
\newcommand{\slg}{\mbox{\bfseries\slshape g}}
\newcommand{\sslg}{\mbox{\tiny \bfseries\slshape g}}
\newcommand{\slR}{\mbox{\bfseries\slshape R}}
\begin{document}

\title{Rigorous Formulation of Duality in Gravitational Theories}
\author{Rold\~{a}o da Rocha$^{(1)}$ and Waldyr A. Rodrigues Jr.$^{(2)}$\\$^{(1)}\hspace{-0.05cm}${\footnotesize Centro de Matem\'{a}tica,
Computa\c{c}\~{a}o e Cogni\c{c}\~{a}o}\\{\footnotesize Universidade Federal do ABC, 09210-170, Santo Andr\'e, SP,
Brazil}\\{\small \texttt{roldao.rocha@ufabc.edu.br}}\\$^{(2)}\,\hspace{-0.1cm}${\footnotesize Institute of Mathematics, Statistics
and Scientific Computation}\\{\footnotesize IMECC-UNICAMP CP 6065}\\{\footnotesize 13083-859 Campinas, SP, Brazil}\\{\small \texttt{walrod@ime.unicamp.br, walrod@mpc.com.br}}}
\maketitle

\begin{abstract}
In this paper we evince a rigorous formulation of duality in gravitational
theories where an Einstein like equation is valid, by providing the conditions
under which $\underset{%
\sslg
}{\star}\mathcal{T}^{\alpha}$ and $\underset{%
\sslg
}{\star}\mathcal{R}_{\beta}^{\alpha}$ may be considered as the torsion and
curvature $2$-forms associated with a connection $D^{\prime}$, part of a
Riemann-Cartan structure $(M,%
\slg
^{\prime},D^{\prime})$, in the cases $%
\slg
^{\prime}=%
\slg
$ and $%
\slg
^{\prime}\neq%
\slg
$, once $\mathcal{T}^{\alpha}$ and $\mathcal{R}_{\beta}^{\alpha}$ are the
torsion and curvature $2$-forms associated with a connection $D$ part of a
Riemann-Cartan structure $(M,%
\slg
,D)$. A new form for the Einstein equation involving the dual of the Riemann
tensor of $D$ is also provided, and the result is compared with others
appearing in the literature.

\end{abstract}


\section{Introduction}

There has been a number of papers trying to put in evidence a possible analogy
between electromagnetism and gravitation, in order to elicit a gravitational
analogue for the magnetic monopole that appears in the generalized Maxwell
equations with magnetic and electric currents\footnote{In such theory, see,
e.g., \cite{maia,rorod} which uses two potentials, the electric and magnetic
currents are phenomenological, i.e., the magnetic current is not a result of a
$U(1)$ gauge theory formulated in a nontrivial base spacetime. So, in the
theory which uses two potentials there are no Dirac strings at all.
Unfortunately, this result is sometimes overlooked in presentations of the
monopole theory and in the proposed gravitational analogies of that concept.}.
Among the old ones we quote\footnote{One of the motivations of \cite{dowro}
was eventually to obtain a quantization of mass.}
\cite{dowro,mignani,mignani1}. Ten years ago Nieto \cite{nieto} developed an
analogue of S-duality\footnote{Duality and S-duality have been also studied
extensively in non abelian gauge theories, see, e.g.
\cite{quevedo,maia1,witten} and references therein.} for linearized gravity in
($3+1$) dimensions (see, also \cite{garcia,nieto2,garcia2}) and
generalizations of that idea of \textit{duality} for gravitational theories in
more dimensions appear, e.g., in
\cite{argueh,argueh2,barnich,beroo,bber1,bunster,forder}. In particular for
the case of gravity in ($3+1$) dimensions a set of equations has been proposed
for Einstein equations, Bianchi identities, and their duals, although mainly
used in the linear approximation. The main aim of this note is to derive exact
equations that must be satisfied by the dual Einstein equations and for the
duals of the torsion and curvature $2$-forms of a general Riemann-Cartan
structure $(M,%
\slg
,D).$ We study in which conditions the dualized objects realize a
Riemann-Cartan structure $(M,%
\slg
,D^{\prime})$, or even a $(M,%
\slg
^{\prime},D^{\prime})$ one. In so doing we find that the correct field
equations for a dual theory (in a precise mathematical sense defined below),
are at variance with ones proposed in some of the above mentioned papers. In
so doing we hope that the present note be useful for those pursuing the
interesting ideas of duality in gravitational theories.

The paper, which uses an intrinsic formulation of the theories presented, is
organized as follows. In Section 2 we present some necessary preliminaries
that serve, besides the proposal of introducing our notation, also the one of
presenting what it is understood here by a Riemann-Cartan gravitational
theory. In this Section we review also the Bianchi identities for the torsion
and curvature $2$-forms $\mathcal{T}^{\alpha}$ and $\mathcal{R}_{\beta
}^{\alpha}$ of $(M,%
\slg
,D)$ in intrinsic and component forms, because those formulas for a
Riemann-Cartan theory are not well known as they deserve to be, and sometimes
concealed from the formalism. In Section 3 we introduce the Ricci  1-form
fields $\mathcal{R}^{\mu}$ and the Einstein 1-forms fields $\mathcal{G}^{\mu}$
\cite{rodcap2007}, and further prove a Proposition containing a formula that
relates the dual $\underset{%
\sslg
}{\star}\mathcal{R}^{\mu}$ of $\mathcal{R}^{\mu}$ to a sum, involving the
\textit{dual} of the Riemann tensor and an important formula for the dual
$\underset{%
\sslg
}{\star}\mathcal{G}^{\mu}$ of $\mathcal{G}^{\mu}$, that permits us to write
Einstein equations in a suggestive way concerning duality structures. In
Section 4 we provide the correct dual of Einstein equation in Riemann-Cartan
theory. In Section 5 we delve into the formalism under which conditions
$\underset{%
\sslg
}{\star}\mathcal{T}^{\alpha}$ and $\underset{%
\sslg
}{\star}\mathcal{R}_{\beta}^{\alpha}$ may be considered as the torsion and
curvature $2$-forms associated with a connection $D^{\prime}$ part of a
Riemann-Cartan structure $(M,%
\slg
,D^{\prime})$. Our result is then compared in Section 6 with the ones, e.g.,
in \cite{argueh}, which are then commented and analyzed in the present
context. In Section 7 we study the same problem as in Section 5 but this time
asking the conditions under which $\underset{%
\sslg
}{\star}\mathcal{T}^{\alpha}$ and $\underset{%
\sslg
}{\star}\mathcal{R}_{\beta}^{\alpha}$ may be considered as the torsion and
curvature $2$-forms associated to a connection $D^{\prime}$ part of a
Riemann-Cartan structure $(M,%
\slg
^{\prime},D^{\prime})$ with $%
\slg
^{\prime}\neq%
\slg
$. In Section 8 we present our conclusions. The paper contains some Appendices
reviewing the definition of the exterior covariant derivative of
\textit{indexed} form fields, the decomposition of the Riemann and Ricci
tensors of a general Riemann-Cartan structure $(M,%
\slg
,D)$, together with their respective similars for a Lorentzian structure
$(M,\mathring{%
\slg
},\mathring{D})$, needed to perceive some statements in the main text. There
is also an Appendix containing a collection of identities involving the
contraction of differential forms and Hodge duals used in the derivations hereon.

\section{Some Necessary Preliminaries}

We start this Section by recalling some germane facts concerning the
Riemann-Cartan structures and a particular and outstanding case of those
structures, the Lorentzian one, which serves for the purpose of fixing our
notations, besides other relevant properties and prominent applications. In
what follows a general Riemann-Cartan structure will be denoted by $(M,%
\slg
,D)$. Here $M$ is a $4$-dimensional Hausdorff, paracompact, connected, and
noncompact manifold, $%
\slg
\in\sec T_{2}^{0}M$ a metric tensor field of signature $(1,3)$, $D$ is a
connection on $M$. Also the connection $D$ is metric compatible, i.e., $ D%
\slg
=0$ and, moreover, for a general Riemann-Cartan structure the torsion and
curvature tensors\footnote{For the conventions used for those tensors in this
paper see the Appendix.} of $D$ --- denoted by $\mathcal{T}$ and $\mathbf{R}$
--- are non null. When $\mathcal{T}=0$ and $\mathbf{R} \neq0$, a
Riemann-Cartan structure is called a \textit{Lorentzian structure} and will be
denoted by $(M,%
\slg
,\mathring{D})$\footnote{The connection satisfying $\mathring{D}%
\slg
=0$ and $\mathbf{\mathcal{T}=}0$ is unique and is called the Levi-Civita
connection of $%
\slg
$.}. When $\mathbf{R}=0$ a Lorentzian structure is called Minkowski structure.
To present the definition of $\mathcal{T}$ and $\mathbf{R}$, and the
conventions used in this paper, first the torsion and curvature operations are introduced.

\begin{definition}
\label{torsion+curv} Let $\mathbf{u,v\in}\sec TM$. The \textit{torsion and
curvature operations} of a connection $D$, are respectively the mappings:
$\mathbf{\tau:}\sec TM \otimes\sec TM\rightarrow\sec TM$ and $\mathbf{\rho}:
\sec TM \otimes\sec TM\rightarrow\sec TM$ given by
\begin{align}
\mathbf{\tau}(\mathbf{u,v})  &  =\nabla_{\mathbf{u}}\mathbf{v}-\nabla
_{\mathbf{v}}\mathbf{u}-[\mathbf{u,v}],\label{top}\\
\mathbf{\rho(u,v)}  &  =\nabla_{\mathbf{u}}\nabla_{\mathbf{v}}-\nabla
_{\mathbf{v}}\nabla_{\mathbf{u}}-\nabla_{\lbrack\mathbf{u,v}]}. \label{cop}%
\end{align}

\end{definition}

\begin{definition}
Let $\mathbf{u,v,w}\in\sec TM$ and $\alpha\in\sec\Lambda^{1}T^{\ast}M$. The
torsion and curvature tensors of a connection $D$ are the mappings
$\mathcal{T}:\sec(\Lambda^{1}T^{\ast}M\otimes TM\otimes TM)\rightarrow
\mathbb{R}$ and $\mathbf{R}:\sec(TM\otimes\Lambda^{1}T^{\ast}M\otimes
TM\otimes TM)\rightarrow\mathbb{R}$ given by
\begin{align}
\mathcal{T}(\alpha,\mathbf{u,v}) &  =\alpha\left(  \mathbf{\tau}%
(\mathbf{u,v})\right)  ,\label{to op}\\
\mathbf{R}(\mathbf{w},\alpha,\mathbf{u,v}) &  =\alpha(\mathbf{\rho
(u,v)w}).\label{curv op}%
\end{align}

\end{definition}

Given an arbitrary moving frame $\{%
\sle
_{\alpha}\}$ on $TM$, let $\{\theta^{\rho}\}$ be the \textit{dual frame\/ of}
$\{%
\sle
_{\alpha}\}$ (i.e., $\theta^{\rho}(%
\sle
_{\alpha})=\delta_{\alpha}^{\rho}$). Let also $\{%
\sle
^{\alpha}\}$ be the reciprocal basis of $\{%
\sle
_{\beta}\}$, i.e., $%
\slg
(%
\sle
^{\alpha},%
\sle
_{\beta})=\delta_{\beta}^{\alpha}$ and let $\{\theta_{\alpha}\}$ be the
reciprocal basis of $\{\theta^{\rho}\}$, i.e., \ $\theta_{\alpha}(%
\sle
^{\beta})=\delta_{\alpha}^{\beta}$. We write
\begin{equation}%
\begin{array}
[c]{ccl}%
\lbrack%
\sle
_{\alpha}\mathbf{,}%
\sle
_{\beta}] & = & c_{\alpha\beta}^{\rho}%
\sle
_{\rho},\qquad\qquad\quad D_{%
\sle
_{\alpha}}%
\sle
_{\beta} = L_{\alpha\beta}^{\rho}%
\sle
_{\rho},
\end{array}
\end{equation}
where $c_{\alpha\beta}^{\rho}$ are the \textit{structure coefficients\/} of
the frame $\{%
\sle
_{\alpha}\}$ and $L_{\alpha\beta}^{\rho}$ are the \textit{connection
coefficients} in this frame. Then, the components of the torsion and curvature
tensors are given, respectively, by:
\begin{equation}%
\begin{array}
[c]{c}%
\mathcal{T}(\theta^{\alpha},%
\sle
_{\alpha}\mathbf{,}%
\sle
_{\beta})=T_{\alpha\beta}^{\rho}=L_{\alpha\beta}^{\rho}-L_{\beta\alpha}^{\rho
}-c_{\alpha\beta}^{\rho}\\
\mathbf{R(}%
\sle
_{\mu},\theta^{\alpha},%
\sle
_{\alpha}\mathbf{,}%
\sle
_{\beta}\mathbf{)}=R_{\mu}{}^{\rho}{}_{\!\alpha\beta}=%
\sle
_{\alpha}(L_{\beta\mu}^{\rho})-%
\sle
_{\beta}(L_{\alpha\mu}^{\rho})+L_{\alpha\sigma}^{\rho}L_{\beta\mu}^{\sigma
}-L_{\beta\sigma}^{\rho}L_{\alpha\mu}^{\sigma}-c_{\alpha\beta}^{\sigma
}L_{\sigma\mu}^{\rho}.
\end{array}
\label{585}%
\end{equation}
We can easily verify that defining
\begin{equation}
R_{\mu\nu\alpha\beta}:=g_{\mu\rho}R_{\mu}{}^{\rho}{}_{\!\alpha\beta}
\label{585a}%
\end{equation}
it follows that%
\begin{equation}
R_{\mu\nu\alpha\beta}=R_{\nu\mu\alpha\beta}=R_{\mu\nu\beta\alpha}.
\label{585b}%
\end{equation}

\begin{remark}
When the torsion tensor of $D$ is null, besides the symmetries given in
\emph{Eq.(\ref{585b})}, also the symmetry%
\begin{equation}
R_{\mu\nu\alpha\beta}=R_{\beta\alpha\mu\nu} \label{585c}%
\end{equation}
holds.
\end{remark}

Now, taking into account Eq.(\ref{585b}) we introduce also a \textquotedblleft
physically equivalent\textquotedblright\ Riemann tensor $%
\slR
$ by%
\begin{align}%
\slR
&  =\frac{1}{4}R_{\mu\nu\alpha\beta}\theta^{\mu}\wedge\theta^{\nu}%
\otimes\theta^{\alpha}\wedge\theta^{\beta}=\frac{1}{4}R^{\mu\nu}{}%
_{\!\alpha\beta}\theta_{\mu}\wedge\theta_{\nu}\otimes\theta^{\alpha}%
\wedge\theta^{\beta}\nonumber\\
&  =\frac{1}{4}R_{\mu\nu}^{\,\;\;\alpha\beta}\theta^{\mu}\wedge\theta^{\nu
}\otimes\theta_{\alpha}\wedge\theta_{\beta}.\label{586}%
\end{align}
In addition,
\begin{equation}%
\begin{array}
[c]{l}%
d\theta^{\rho}=-\frac{1}{2}c_{\alpha\beta}^{\rho}\theta^{\alpha}\wedge
\theta^{\beta},\qquad\qquad\quad D_{e_{\alpha}}\theta^{\rho}=-L_{\alpha\beta
}^{\rho}\theta^{\beta}%
\end{array}
\label{608}%
\end{equation}
where $\omega_{\beta}^{\rho}\in\sec\Lambda^{1}T^{\ast}M$ are the
\textit{connection 1-forms, } $\mathcal{T}^{\rho}\in\sec\Lambda^{2}T^{\ast}M$
are the \textit{torsion 2-forms} and $\mathcal{R}_{\beta}^{\rho}\in\sec
\Lambda^{2}T^{\ast}M$\textbf{\ }are the \textit{curvature 2-forms}, given
respectively by
\begin{equation}
\omega_{\beta}^{\rho}=L_{\alpha\beta}^{\rho}\theta^{\alpha},\quad
\quad\mathcal{T}^{\rho}=\frac{1}{2}T_{\alpha\beta}^{\rho}\theta^{\alpha}%
\wedge\theta^{\beta},\quad\quad\mathcal{R}_{\mu}^{\rho}=\frac{1}{2}R_{\mu}%
{}^{\rho}{}_{\!\alpha\beta}\theta^{\alpha}\wedge\theta^{\beta}.\label{620}%
\end{equation}
Multiplying Eqs.(\ref{585}) by $\frac{1}{2}\theta^{\alpha}\wedge\theta^{\beta
}$ and using Eqs.(\ref{608}) and~(\ref{620}), the \textit{Cartan's structure
equations} are derived:%
\begin{equation}%
\begin{array}
[c]{l}%
\mathcal{T}^{\rho}=d\theta^{\rho}+\omega_{\beta}^{\rho}\wedge\theta^{\beta
},\qquad\qquad\quad\mathcal{R}_{\mu}^{\rho}=d\omega_{\mu}^{\rho}+\omega
_{\beta}^{\rho}\wedge\omega_{\mu}^{\beta}.
\end{array}
\label{559}%
\end{equation}

\begin{definition}
A Riemann-Cartan spacetime is a pentuple $(M,%
\slg
,D,\tau_{%
\sslg
},\uparrow)$ where $(M,%
\slg
,D)$ is a Riemann-Cartan structure, and we suppose the existence of a global
$\tau_{%
\sslg
}\in\sec\Lambda^{4}T^{\ast}M$ \emph{(}which as well known defines an
orientation for $M$\emph{)}. Moreover, $\uparrow$ denotes that the
Riemann-Cartan structure is time oriented. See, e.g.,
\emph{\cite{rodcap2007,sawu}} for details.
\end{definition}

Pentuples $(M,%
\slg
,D,\tau_{%
\sslg
},\uparrow)$ represent gravitational fields in the so called Riemann-Cartan
theories. In the theory presented, e.g., in \cite{hehl1}, the equations of
motion are the Einstein equation,%
\[
\mathbf{G}=\mathbf{T,}%
\]
where $\mathbf{G}\in\sec T_{2}^{0}M$ is the Einstein tensor, $\mathbf{T}%
\in\sec T_{2}^{0}M$ is the \textit{canonical energy-momentum tensor} of the
matter fields, and the algebraic identity%
\begin{equation}
\mathbf{\Upsilon}_{\alpha\beta}=\mathbf{J}_{\alpha\beta}\mathbf{,} \label{e2}%
\end{equation}
where the $\mathbf{\Upsilon}_{\alpha\beta}\in\sec\Lambda^{1}T^{\ast}M$ are
such that their components are the so called modified torsion \textit{tensor
components}, and the $\underset{%
\sslg
}{\star}\mathbf{J}_{\alpha\beta}$ $\in\sec\Lambda^{3}T^{\ast}M$ are the spin
angular momentum densities of the matter fields\footnote{The components of
$\mathbf{J}_{\alpha\beta}\in\sec\Lambda^{1}T^{\ast}M$ are the standard (field
theory) canonical spin angular momentum of the matter fields. In the
Riemann-Cartan theory of \cite{hehl1}, since Eq.(\ref{e2}) is an algebraic
identity, it is possible to eliminate completely the torsion tensor from the
theory and to write an Einstein equation involving the Einstein tensor of the
Levi-Civita connection of $%
\slg
$ (using the decomposition presented in Appendix B) and a metric
energy-momentum tensor that is equivalent to the Belinfante symmetrization of
the canonical energy-momentum tensor of the theory.}. Also, the symbol
$\underset{%
\sslg
}{\star}$ denotes the Hodge star operator associated to the metric $%
\slg
$.

\begin{remark}
It is crucial to observe that for a general Riemann-Cartan structure
$\mathbf{G}=G_{\mu\nu}\theta^{\mu}\otimes\theta^{\nu}$ and $\mathbf{T}%
=T_{\mu\nu}\theta^{\mu}\otimes\theta^{\nu}$ are not symmetric, i.e.,
$G_{\mu\nu}\neq G_{\nu\mu}$ and $T_{\mu\nu}\neq T_{\nu\mu}$. We recall that
\begin{equation}
G_{\mu\nu}=R_{\mu\nu}-\frac{1}{2}g_{\mu\nu}R, \label{e3}%
\end{equation}
where $R_{\mu\nu}$ are the components of the Ricci tensor \emph{(}which, as
$G_{\mu\nu}$ are not symmetric\emph{)}
\begin{equation}
Ricci=R_{\mu\nu}\theta^{\mu}\otimes\theta^{\nu}:=R_{\mu}{}^{\rho}{}_{\!\rho
\nu}\theta^{\mu}\otimes\theta^{\nu}, \label{e4}%
\end{equation}
and $R=g^{\mu\nu}R_{\mu}{}^{\rho}{}_{\!\rho\nu}$ is the curvature scalar.
\end{remark}

It is also well known that in \textit{GRT} a gravitational field generated by
a given matter distribution (represented by a given energy-momentum tensor
$\mathbf{\mathring{T}}\in\sec T_{2}^{0}M$) is represented by a pentuple
$(M,\mathring{%
\slg
},\mathring{D},\tau_{\mathring{%
\sslg
}},\uparrow)$\footnote{In fact a gravitational field is defined by an
equivalence class of pentuples, where $(M,%
\slg
,D,\tau_{%
\sslg
},\uparrow)$ and $(M^{\prime},%
\slg
^{\prime},D^{\prime},\tau_{%
\sslg
}^{\prime},\uparrow^{\prime})$ are said to be equivalent if there is a
diffeomorphism $\mathtt{h}:M\rightarrow M^{\prime}$, such that $%
\slg
^{\prime}=\mathtt{h}^{\ast}%
\slg
$, $D^{\prime}=\mathtt{h}^{\ast}D$, $\tau_{%
\sslg
}^{\prime}=\mathtt{h}^{\ast}\tau_{%
\sslg
},\uparrow^{\prime}=\mathtt{h}^{\ast}\uparrow,$ (where $\mathtt{h}^{\ast}$
here denotes the pullback mapping). For more details, see, e.g., \cite{sawu,
rodcap2007}. With the above definition we exclude from our considerations
models with closed timelike curves, which according to our view are pure
science fiction.} and the equation of motion (Einstein equation) is given by%
\begin{equation}
\mathbf{\mathring{G}=\mathring{T}} \label{ee1}%
\end{equation}
and in this case the tensors $\mathbf{\mathring{G}}$ and $\mathbf{\mathring
{T}}$ are symmetric.

\begin{remark}
In the Appendix we review how to write the Riemann curvature tensor
(respectively the Einstein tensor) of a Riemann-Cartan structure $(M,%
\slg
,D)$ in terms of the Riemann curvature tensor (respectively the Einstein
tensor) of a Lorentzian structure $(M,\mathring{%
\slg
},\mathring{D})$. Those results are important for a proper understanding of
this paper.
\end{remark}

\subsection{The Bianchi Identities}

Given a general Riemann-Cartan structure $(M,%
\slg
,D)$ we have the following identities%
\begin{align}
\mathbf{D}\mathcal{T}^{\alpha}  &  =\mathcal{R}_{\beta}^{\alpha}\wedge
\theta^{\beta},\label{bianchi1}\\
\mathbf{D}\mathcal{R}_{\beta}^{\alpha}  &  =0, \label{bianchi2}%
\end{align}
known respectively as the first and second Bianchi identities (see e.g.,
\cite{choquet,rodcap2007}). In the above equations, $\mathbf{D}$ is the
exterior covariant derivative of indexed form fields \cite{been,rodcap2007},
whose precise definition is recalled in Appendix A. Now, the coordinate
expressions of Eqs.(\ref{bianchi1}) and (\ref{bianchi2}) can easily be found
and are respectively \cite{choquet,rodALB} written as%
\begin{align}%
{\displaystyle\sum\limits_{(\mu\alpha\beta)}}
R_{\mu}{}^{\rho}{}_{\alpha\beta}  &  =%
{\displaystyle\sum\limits_{(\mu\alpha\beta)}}
\left( D_{\mu}T_{\alpha\beta}^{\rho}-T_{\mu\beta}^{\kappa}T_{\kappa\alpha
}^{\rho}\right)  ,\label{bianchi11}\\%
{\displaystyle\sum\limits_{(\mu\nu\rho)}}
D_{\mu}R_{\beta\;\nu\rho}^{\;\alpha}  &  =%
{\displaystyle\sum\limits_{(\mu\nu\rho)}}
T_{\nu\mu}^{\kappa}R_{\beta\;\kappa\rho}^{\;\alpha}, \label{bianchi21}%
\end{align}
where $%
{\displaystyle\sum\limits_{(\mu\nu\rho)}}
$ denotes (as usual) the cyclic sum. For future use we observe that
\begin{equation}
\mathcal{R}_{\beta}^{\alpha}\wedge\theta^{\beta}=\frac{1}{3!}(R_{\mu}%
{}^{\alpha}{}_{\alpha\beta}+R_{\alpha}{}^{\alpha}{}_{\beta\mu}+R_{\beta}%
{}^{\alpha}{}_{\mu\alpha})\theta^{\mu}\wedge\theta^{\alpha}\wedge\theta
^{\beta}. \label{bia3}%
\end{equation}

\begin{remark}
For a Lorentz structure $(M,\mathring{%
\slg
},\mathring{D})$ the Bianchi identities reduce to
\end{remark}%
\begin{align}
\mathcal{\mathring{R}}_{\beta}^{\alpha}\wedge\theta^{\beta}  &  =0,\qquad
\qquad\quad\mathbf{D}\mathcal{\mathring{R}}_{\beta}^{\alpha} =0, \nonumber
\end{align}
or in components:%
\begin{align}%
{\displaystyle\sum\limits_{(\mu\alpha\beta)}}
R_{\mu}{}^{\rho}{}_{\alpha\beta}  &  =0, \qquad\qquad%
{\displaystyle\sum\limits_{(\mu\nu\rho)}}
D_{\mu}R_{\beta\;\nu\rho}^{\;\alpha} =0. \nonumber
\end{align}

\section{Ricci and Einstein 1-form fields}

Given $R_{\mu\nu}$ and $G_{\mu\nu}$, respectively the components of the Ricci
and Einstein tensors (in the general basis introduced above) we define the
Ricci ($\mathcal{R}^{\mu}\in\sec\Lambda^{1}T^{\ast}M$) and the Einstein
($\mathcal{G}^{\mu}\in\sec\Lambda^{1}T^{\ast}M$ ) $1$-form fields by%
\begin{align}
\mathcal{R}^{\mu}  & :=R_{\;\nu}^{\mu}\theta^{\nu},\qquad\qquad\qquad
\mathcal{G}^{\mu} := G_{\;\nu}^{\mu}\theta^{\nu}. \label{ei}%
\end{align}
For future use we introduce also the energy-momentum $1$-form fields
$\mathbf{T}^{\mu}\in\sec\Lambda^{1}T^{\ast}M$ by%
\begin{equation}
\mathbf{T}^{\mu}:=T_{\;\nu}^{\mu}\theta^{\nu}. \label{ti}%
\end{equation}
Also%
\begin{align}
\underset{%
\sslg
}{\star}\mathbf{T}^{\mu}  &  =T_{\;\nu}^{\mu}\underset{%
\sslg
}{\star}\theta^{\nu}\nonumber\\
&  =\frac{1}{3!}(T_{\;\nu}^{\mu}\sqrt{\left\vert \det%
\slg
\right\vert }g^{\nu\kappa}\epsilon_{\kappa\iota\lambda\sigma})\theta^{\iota
}\wedge\theta^{\kappa}\wedge\theta^{\sigma}. \label{tii}%
\end{align}

\begin{proposition}
The dual of the Ricci and Einstein $1$-form fields, i.e., $\underset{%
\sslg
}{\star}\mathcal{R}^{\alpha}\in\sec\Lambda^{3}T^{\ast}M$ and $\underset{%
\sslg
}{\star}\mathcal{G}^{\alpha}\in\sec\Lambda^{3}T^{\ast}M$ can be written as:
\begin{align}
\underset{%
\sslg
}{\star}\mathcal{R}^{\alpha}  &  =-\underset{%
\sslg
}{\star}\mathcal{R}_{\beta}^{\alpha}\wedge\theta^{\beta}=-\theta^{\beta}%
\wedge\underset{%
\sslg
}{\star}\mathcal{R}_{\beta}^{\alpha},\label{rii}\\
\underset{%
\sslg
}{\star}\mathcal{G}^{\rho}  &  =-\frac{1}{2}\mathcal{R}_{\alpha\beta}%
\wedge\underset{%
\sslg
}{\star}(\theta^{\alpha}\wedge\theta^{\beta}\wedge\theta^{\rho}). \label{eii}%
\end{align}
where $\mathcal{R}_{\mu}^{\rho}=\frac{1}{2}R_{\mu}{}^{\rho}{}_{\!\alpha\beta
}\theta^{\alpha}\wedge\theta^{\beta}$ and $\mathcal{R}_{\mu\rho}:=\frac{1}%
{2}R_{\mu\rho\alpha\beta}\theta^{\alpha}\wedge\theta^{\beta}$.
\end{proposition}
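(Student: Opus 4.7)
My plan is to exploit the duality between wedging with a $1$-form and taking the interior product along its metric-dual vector, together with the algebraic relation between the Ricci $1$-form and a single trace of the curvature $2$-form. Neither identity invokes the Bianchi identities: both are purely algebraic consequences of the definitions in Section~2 and of the metric-compatibility $D\slg=0$.

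For equation~(\ref{rii}), I would apply $\underset{\sslg}{\star}$ to both sides and reduce the statement to an identity between $1$-forms. The key tool, which I expect to be among the Hodge/interior-product identities collected in the appendix, is a formula of the form $\underset{\sslg}{\star}(\omega \wedge \theta^\beta) = i_{\sle^\beta}\underset{\sslg}{\star}\omega$, together with the sign of $\underset{\sslg}{\star}\underset{\sslg}{\star}$ on $2$- and $3$-forms in the four-dimensional Lorentzian setting. Applied to $\omega = \underset{\sslg}{\star}\mathcal{R}^\alpha_\beta$ these collapse the claim to the identification $i_{\sle^\beta}\mathcal{R}^\alpha_\beta = \pm\mathcal{R}^\alpha$. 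Substituting $\mathcal{R}^\alpha_\beta = \tfrac12 R_\beta{}^\alpha{}_{\mu\nu}\theta^\mu \wedge \theta^\nu$ and using $i_{\sle^\beta}\theta^\mu = g^{\beta\mu}$, this interior product evaluates to $R^{\rho\alpha}{}_{\rho\nu}\theta^\nu$; the metric-compatibility-induced antisymmetry $R_{\sigma\tau\mu\nu} = -R_{\tau\sigma\mu\nu}$ in the first index pair then rearranges this into $\pm R^\alpha{}_\nu \theta^\nu = \pm\mathcal{R}^\alpha$, producing the required sign. The second equality $\underset{\sslg}{\star}\mathcal{R}^\alpha_\beta \wedge \theta^\beta = \theta^\beta \wedge \underset{\sslg}{\star}\mathcal{R}^\alpha_\beta$ is immediate from the commutativity of a $1$-form past a $2$-form under the wedge product.

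For equation~(\ref{eii}), I would start from the definition $\mathcal{G}^\rho = \mathcal{R}^\rho - \tfrac12 R\,\theta^\rho$. Applying $\underset{\sslg}{\star}$ and using~(\ref{rii}) gives $\underset{\sslg}{\star}\mathcal{G}^\rho = -\underset{\sslg}{\star}\mathcal{R}^\rho_\beta \wedge \theta^\beta - \tfrac12 R\,\underset{\sslg}{\star}\theta^\rho$. The remaining task is to rewrite this as $-\tfrac12\mathcal{R}_{\alpha\beta}\wedge \underset{\sslg}{\star}(\theta^\alpha \wedge \theta^\beta \wedge \theta^\rho)$. I would expand the latter using an appendix identity of the form $\theta^\mu \wedge \theta^\nu \wedge \underset{\sslg}{\star}(\theta^\alpha \wedge \theta^\beta \wedge \theta^\rho) = \underset{\sslg}{\star}\bigl[i_{\sle^\nu}i_{\sle^\mu}(\theta^\alpha \wedge \theta^\beta \wedge \theta^\rho)\bigr]$ and iterate $i_{\sle^\mu}\theta^\lambda = g^{\mu\lambda}$. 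The six terms produced by the two interior products contract pairs of indices of $R_{\alpha\beta\mu\nu}$ through $g^{..}$ factors, and after using both antisymmetries of Riemann these reassemble into a single Ricci contribution $R^\rho{}_\nu\,\underset{\sslg}{\star}\theta^\nu$ and a scalar-curvature contribution $-\tfrac12 R\,\underset{\sslg}{\star}\theta^\rho$, which is precisely $\underset{\sslg}{\star}\mathcal{G}^\rho$.

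The main obstacle I anticipate is not conceptual but combinatorial: correctly tracking the Lorentzian double-Hodge-star signs on the $2$- and $3$-form factors, the two independent antisymmetries of the Riemann tensor (only the first-pair one requires $D\slg=0$), the orientation factor $\sqrt{|\det\slg|}$ in a non-orthonormal frame, and the six permutations generated by the iterated interior products in the proof of~(\ref{eii}). The identities gathered in the appendix are designed precisely to absorb this bookkeeping, so once they are invoked the argument should reduce to a short computation.
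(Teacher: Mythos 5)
Your proposal is correct and follows essentially the same route as the paper: both reduce $\theta^{\beta}\wedge\underset{\sslg}{\star}\mathcal{R}_{\beta}^{\alpha}$ and $\mathcal{R}_{\alpha\beta}\wedge\underset{\sslg}{\star}(\theta^{\alpha}\wedge\theta^{\beta}\wedge\theta^{\rho})$ to Hodge duals of iterated contractions via the Appendix identities, and then evaluate those contractions on the curvature components to recover $\mathcal{R}^{\alpha}$ and $\mathcal{R}^{\rho}-\frac{1}{2}R\,\theta^{\rho}$. The only cosmetic difference is that the paper applies $A_{r}\wedge\underset{\sslg}{\star}B_{s}=(-1)^{r(s-1)}\underset{\sslg}{\star}(\tilde{A}_{r}\underset{\sslg}{\lrcorner}B_{s})$ directly instead of inserting a double Hodge star, which also pins down the signs you left as $\pm$.
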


\begin{proof}
Using some of the identities in Appendix E we can write immediately%
\begin{align*}
\theta^{\rho}\wedge\underset{%
\sslg
}{\star}\mathcal{R}_{\mu\rho}  &  =-\underset{%
\sslg
}{\star}(\theta^{\rho}\underset{%
\sslg
}{\lrcorner}\mathcal{R}_{\mu\rho})\\
&  =-\underset{%
\sslg
}{\star}\frac{1}{2}[R_{\mu\rho\alpha\beta}\theta^{\rho}\underset{%
\sslg
}{\lrcorner}(\theta^{\alpha}\wedge\theta^{\beta})] =-\underset{%
\sslg
}{\star}(R_{\mu\rho\alpha\beta}g^{\rho\alpha}\theta^{\beta})\\
&  =-\underset{%
\sslg
}{\star}(R_{\mu\;\alpha\beta}^{\;\alpha}\theta^{\beta})=-\star(R_{\mu\beta
}\theta^{\beta})\\
&  =-\underset{%
\sslg
}{\star}\mathcal{R}_{\mu},
\end{align*}
and Eq.(\ref{rii}) is proved.

Now Eq.(\ref{eii}) is evinced. By taking some of the identities in Appendix E,
we can immediately write:%
\begin{align*}
\frac{1}{2}\mathcal{R}_{\alpha\beta}\wedge\underset{%
\sslg
}{\star}(\theta^{\alpha}\wedge\theta^{\beta}\wedge\theta^{\rho})  &
=-\frac{1}{2}\underset{%
\sslg
}{\star}[\mathcal{R}_{\alpha\beta}\underset{%
\sslg
}{\lrcorner}(\theta^{\alpha}\wedge\theta^{\beta}\wedge\theta^{\rho})]\\
&  =-\frac{1}{4}R_{\alpha\beta\iota\kappa}\underset{%
\sslg
}{\star}[(\theta^{\iota}\wedge\theta^{\kappa})\underset{%
\sslg
}{\lrcorner}(\theta^{\alpha}\wedge\theta^{\beta}\wedge\theta^{\rho})]
=-\frac{1}{4}R_{\alpha\beta\iota\kappa}\underset{%
\sslg
}{\star}[(\theta^{\iota}\underset{%
\sslg
}{\lrcorner}(\theta^{\kappa}\underset{%
\sslg
}{\lrcorner}(\theta^{\alpha}\wedge\theta^{\beta}\wedge\theta^{\rho}))]\\
&  =-\underset{%
\sslg
}{\star}(\mathcal{R}^{\rho}-\frac{1}{2}R\theta^{\rho}),
\end{align*}
and Eq.(\ref{eii}) is proved.
\end{proof}

\begin{remark}
Recall that
\begin{align}
\underset{%
\sslg
}{\star}\mathcal{R}_{\mu\rho}  &  :=\frac{1}{2}R_{\mu\rho\alpha\beta}%
\underset{%
\sslg
}{\star}(\theta^{\alpha}\wedge\theta^{\beta})\nonumber\\
&  =\frac{1}{2}R_{\mu\rho\alpha\beta}\frac{1}{2}\sqrt{\left\vert \det%
\slg
\right\vert }g^{\alpha\kappa}g^{\beta\iota}\epsilon_{\kappa\iota\lambda\sigma
}\theta^{\lambda}\wedge\theta^{\sigma} =\frac{1}{2}(\frac{1}{2}\sqrt
{\left\vert \det%
\slg
\right\vert }\epsilon_{\kappa\iota\lambda\sigma}R_{\mu\rho}^{\;\;\kappa\iota
})\theta^{\lambda}\wedge\theta^{\sigma}\nonumber\\
&  =\frac{1}{2}\mathsf{R}_{\mu\rho\lambda\sigma}^{\star}\theta^{\lambda}%
\wedge\theta^{\sigma}\label{p1}\\
&  =\frac{1}{2}R_{\mu\rho\lambda\sigma}^{\star}\sqrt{\left\vert \det%
\slg
\right\vert }\theta^{\lambda}\wedge\theta^{\sigma},\nonumber
\end{align}
with%
\begin{align}
\mathsf{R}_{\mu\rho\lambda\sigma}^{\star}  &  :=\frac{1}{2}\sqrt{\left\vert
\det%
\slg
\right\vert }\epsilon_{\kappa\iota\lambda\sigma}R_{\;\;\alpha\beta}%
^{\kappa\iota},\quad\quad\text{and}\qquad R_{\mu\rho\lambda\sigma}^{\star}
:=\frac{1}{2}\epsilon_{\kappa\iota\lambda\sigma}R_{\;\;\alpha\beta}%
^{\kappa\iota}, \label{p3}%
\end{align}
and so it follows that%
\begin{align}
\star\mathcal{R}_{\mu\rho}\wedge\theta^{\rho}  &  =\frac{1}{2}\mathsf{R}%
_{\mu\rho\lambda\sigma}^{\star}\theta^{\lambda}\wedge\theta^{\sigma}%
\wedge\theta^{\rho} =\frac{1}{2}\mathsf{R}_{\mu\rho\lambda\sigma}^{\star
}\theta^{\rho}\wedge\theta^{\lambda}\wedge\theta^{\sigma}\nonumber\\
&  =\frac{1}{2}\left(  \frac{1}{3}\mathsf{R}_{\mu\rho\lambda\sigma}^{\star
}\theta^{\rho}\wedge\theta^{\lambda}\wedge\theta^{\sigma}+\frac{1}%
{3}\mathsf{R}_{\mu\lambda\sigma\rho}^{\star}\theta^{\lambda}\wedge
\theta^{\sigma}\wedge\theta^{\rho}+\frac{1}{3}\mathsf{R}_{\mu\sigma\rho
\lambda}^{\star}\theta^{\sigma}\wedge\theta^{\rho}\wedge\theta^{\lambda
}\right) \label{p4}\\
&  =\frac{1}{3!}\left(  \mathsf{R}_{\mu\rho\lambda\sigma}^{\star}%
+\mathsf{R}_{\mu\lambda\sigma\rho}^{\star}+\mathsf{R}_{\mu\sigma\rho\lambda
}^{\star}\right)  \theta^{\rho}\wedge\theta^{\lambda}\wedge\theta^{\sigma
}\nonumber\\
&  =\frac{1}{3!}\left(  R_{\mu\rho\lambda\sigma}^{\star}+R_{\mu\lambda
\sigma\rho}^{\star}+R_{\mu\sigma\rho\lambda}^{\star}\right)  \sqrt{\left\vert
\det%
\slg
\right\vert }\theta^{\rho}\wedge\theta^{\lambda}\wedge\theta^{\sigma}.
\end{align}
and taking into account \emph{Eq.(\ref{rii})} it reads:%
\begin{equation}
\underset{%
\sslg
}{\star}\mathcal{R}_{\mu}=-\frac{1}{3!}\left(  R_{\mu\rho\lambda\sigma}%
^{\star}+R_{\mu\lambda\sigma\rho}^{\star}+R_{\mu\sigma\rho\lambda}^{\star
}\right)  \sqrt{\left\vert \det%
\slg
\right\vert }\theta^{\rho}\wedge\theta^{\lambda}\wedge\theta^{\sigma}.
\label{p5}%
\end{equation}

\end{remark}

\section{The Dual of Einstein Equation in Riemann-Cartan Theory}

We now return to Eq.(\ref{ei}) which in components can read\footnote{Take
notice that in Eq.(\ref{p6}) $R_{\mu\nu}$ and $T_{\mu\nu}$ are not symmetric.}%
\begin{equation}
R_{\mu\nu}-\frac{1}{2}g_{\mu\nu}R=T_{\mu\nu}\label{p6}%
\end{equation}
Multiplying this equation on both sides by $\theta^{\nu}$ and recalling the
definitions of the Ricci, Einstein, and the energy-momentum $1$-form fields
given above we have%
\begin{equation}
\mathcal{G}_{\mu}=\mathbf{T}_{\mu}.\label{p7}%
\end{equation}
Taking the dual of this equation we get%
\begin{equation}
\underset{%
\sslg
}{\star}\mathcal{G}_{\mu}=\underset{%
\sslg
}{\star}\mathcal{R}_{\mu}-\frac{1}{2}R\underset{%
\sslg
}{\star}\theta_{\mu}=\underset{%
\sslg
}{\star}\mathbf{T}_{\mu}\label{p8}%
\end{equation}
Taking Eq.(\ref{p5}) and Eq.(\ref{tii}) into account, then Eq.(\ref{p8}) can
be expressed as%
\begin{align*}
&  -\frac{1}{3!}\left(  R_{\mu\rho\lambda\sigma}^{\star}+R_{\mu\lambda
\sigma\rho}^{\star}+R_{\mu\sigma\rho\lambda}^{\star}+\frac{1}{2}R\delta_{\mu
}^{\kappa}\epsilon_{\kappa\rho\lambda\sigma}\right)  \sqrt{\left\vert \det%
\slg
\right\vert }\theta^{\rho}\wedge\theta^{\lambda}\wedge\theta^{\sigma}\\
&  =\frac{1}{3!}(T_{\mu\nu}g^{\nu\kappa}\epsilon_{\kappa\rho\lambda\sigma
})\sqrt{\left\vert \det%
\slg
\right\vert }\theta^{\rho}\wedge\theta^{\lambda}\wedge\theta^{\sigma},
\end{align*}
or equivalently%
\begin{equation}
\left(  R_{\mu\rho\lambda\sigma}^{\star}+R_{\mu\lambda\sigma\rho}^{\star
}+R_{\mu\sigma\rho\lambda}^{\star}+\frac{1}{2}R\epsilon_{\mu\rho\lambda\sigma
}\right)  =\epsilon_{\rho\lambda\sigma\kappa}T_{\mu}^{\;\kappa}.\label{p9}%
\end{equation}

\subsection{The Field and Structure Equations}

We now summarize the field and Bianchi identities for a Riemann-Cartan theory
where an Einstein-like equation holds. Those equations can be written
\textit{conveniently} in intrinsic and component forms respectively as:%
\begin{align}
\underset{%
\sslg
}{\star}\mathcal{G}_{\mu}  &  =\underset{%
\sslg
}{\star}\mathcal{R}_{\mu}-\frac{1}{2}R\underset{%
\sslg
}{\star}\theta_{\mu}=\underset{%
\sslg
}{\star}\mathbf{T}_{\mu}\label{f010}\\
\mathbf{D}\mathcal{T}^{\alpha}  &  =\mathcal{R}_{\beta}^{\alpha}\wedge
\theta^{\beta},\label{f01}\\
\mathbf{D}\mathcal{R}_{\beta}^{\alpha}  &  =0, \label{f011}%
\end{align}
\begin{align}
\left(  R_{\mu\rho\lambda\sigma}^{\star}+R_{\mu\lambda\sigma\rho}^{\star
}+R_{\mu\sigma\rho\lambda}^{\star}+\frac{1}{2}R\epsilon_{\mu\rho\lambda\sigma
}\right)   &  =\epsilon_{\rho\lambda\sigma\kappa}T_{\mu}^{\;\kappa
}\Longleftrightarrow G_{\mu\nu}=T_{\mu\nu},\label{f10}\\%
{\displaystyle\sum\limits_{(\mu\alpha\beta)}}
R_{\mu\rho\alpha\beta}  &  =%
{\displaystyle\sum\limits_{(\mu\alpha\beta)}}
\left(D_{\mu}T_{\rho\alpha\beta}-T_{\mu\beta}^{\kappa}T_{\rho\kappa\alpha
}\right)  ,\label{f1}\\%
{\displaystyle\sum\limits_{(\mu\nu\rho)}}
D_{\mu}R_{\beta\alpha\nu\rho}  &  =%
{\displaystyle\sum\limits_{(\mu\nu\rho)}}
T_{\nu\mu}^{\kappa}R_{\beta\alpha\kappa\rho}. \label{f11}%
\end{align}
In a \textit{GRT }model it follows that
\begin{align}
\left(\mathring{R}_{\mu\rho\lambda\sigma}^{\star}+\mathring{R}_{\mu
\lambda\sigma\rho}^{\star}+\mathring{R}_{\mu\sigma\rho\lambda}^{\star}%
+\frac{1}{2}\mathring{R}\epsilon_{\mu\rho\lambda\sigma}\right)   &
=\epsilon_{\rho\lambda\sigma\kappa}\mathring{T}_{\mu}^{\;\kappa}%
\Leftrightarrow G_{\mu\nu}=T_{\mu\nu},\nonumber\\%
{\displaystyle\sum\limits_{(\mu\alpha\beta)}}
\mathring{R}_{\mu\rho\alpha\beta}  &  =0, \qquad\qquad%
{\displaystyle\sum\limits_{(\mu\nu\rho)}}
\mathring{D}_{\mu}\mathring{R}_{\beta\alpha\nu\rho} =0.\label{f2}%
\end{align}
\begin{remark}
Before proceeding we want to emphasize that the \emph{Eq.(\ref{bianchi1})} and
\emph{Eq.(\ref{bianchi2})} \emph{(}the Bianchi identities\emph{)} do not imply
in general in the validity of the analogous equations for the duals of the
torsion and curvature $2$-forms, i.e., in general\footnote{In particular, a
correct expression for $\mathbf{D}\underset{%
\sslg
}{\star}\mathcal{T}^{\alpha}$ has been found in \cite{rodALB}.}%
\begin{align}
\mathbf{D}\underset{%
\sslg
}{\star}\mathcal{T}^{\alpha} &  \neq\underset{%
\sslg
}{\star}\mathcal{R}_{\beta}^{\alpha}\wedge\theta^{\beta},\label{biii}\\
\mathbf{D}\underset{%
\sslg
}{\star}\mathcal{R}_{\beta}^{\alpha} &  \neq 0.\label{biv}%
\end{align}
\end{remark}

\section{Are $\underset{%
\sslg
}{\star}\mathcal{T}^{\alpha}$ and $\underset{%
\sslg
}{\star}\mathcal{R}_{\beta}^{\alpha}$ the Torsion and Curvature $2$-Forms of
any Connection?}

Despite the fact aforementioned in the last Remark, we may pose the question:
can $\underset{%
\sslg
}{\star}\mathcal{T}^{\alpha}$ and $\underset{%
\sslg
}{\star}\mathcal{R}_{\beta}^{\alpha}$ be the torsion and curvature $2$-forms
of a $%
\slg
$-metric compatible connection, say $D^{\prime}$, which defines on $M$ the
Riemann-Cartan structure $(M,%
\slg
,D^{\prime})$ where also an Einstein like equation is valid? If the answer is
positive, the following set of equations must hold:%
\begin{align}
\underset{%
\sslg
}{\star}\mathcal{G}_{\mu}^{\prime}  &  =\underset{%
\sslg
}{\star}\mathcal{R}_{\mu}^{\prime}-\frac{1}{2}R^{\prime}\underset{%
\sslg
}{\star}\theta_{\mu}=\underset{%
\sslg
}{\star}\mathbf{T}_{\mu}^{\prime}\nonumber\\
\mathbf{D}^{\prime}\mathcal{T}^{\prime\alpha}  &  =\mathcal{R}_{\beta}%
^{\prime\alpha}\wedge\theta^{\beta},\label{f3}\\
\mathbf{D}^{\prime}\mathcal{R}_{\beta}^{\prime\alpha}  &  =0,\nonumber
\end{align}
and since by hypothesis $\underset{%
\sslg
}{\star}\mathcal{T}^{\alpha}=\mathcal{T}^{\prime\alpha}$ and $\underset{%
\sslg
}{\star}\mathcal{R}_{\beta}^{\alpha}=\mathcal{R}_{\beta}^{\prime\alpha}$,
calling $\mathcal{G}_{\mu}^{\ast}=\mathcal{G}_{\mu}^{^{\prime}}$,
$\mathcal{R}_{\mu}^{\ast}$ =$\mathcal{R}_{\mu}^{\prime}$, $R^{\ast}=R^{\prime
}=g^{\mu\beta}R_{\mu\;\rho\beta}^{\ast^{\rho}}$ it must be%
\begin{align}
\underset{%
\sslg
}{\star}\mathcal{G}_{\mu}^{\ast}  &  =\underset{%
\sslg
}{\star}\mathcal{R}_{\mu}^{\ast}-\frac{1}{2}R^{\ast}\underset{%
\sslg
}{\star}\theta_{\mu}=\underset{%
\sslg
}{\star}\mathbf{T}_{\mu}^{\prime}\nonumber\\
\mathbf{D}^{\prime}\underset{%
\sslg
}{\star}\mathcal{T}^{\alpha}  &  =\underset{%
\sslg
}{\star}\mathcal{R}_{\beta}^{\alpha}\wedge\theta^{\beta},\label{f4}\\
\mathbf{D}^{\prime}\underset{%
\sslg
}{\star}\mathcal{R}_{\beta}^{\alpha}  &  =0.\nonumber
\end{align}
or in component form (and obvious notation)
\begin{align}
\left(  R_{\mu\rho\lambda\sigma}+R_{\mu\lambda\sigma\rho}+R_{\mu\sigma
\rho\lambda}+\frac{1}{2}R^{\ast}\epsilon_{\mu\rho\lambda\sigma}\right)   &
=\epsilon_{\rho\lambda\sigma\kappa}T_{\mu}^{\prime\;\kappa}\Longleftrightarrow
G_{\mu\nu}^{\prime}=T_{\mu\nu}^{\prime},\label{f5}\\%
{\displaystyle\sum\limits_{(\mu\alpha\beta)}}
R_{\mu\rho\alpha\beta}^{\ast}  &  =%
{\displaystyle\sum\limits_{(\mu\alpha\beta)}}
\left(  D_{\mu}^{\prime}T_{\rho\alpha\beta}^{\ast}-T_{\mu\beta}^{^{\ast}%
\kappa}T_{\rho\kappa\alpha}^{\ast}\right)  ,\label{f6}\\%
{\displaystyle\sum\limits_{(\mu\nu\rho)}}
D_{\mu}^{\prime}R_{\beta\alpha\nu\rho}^{\ast}  &  =%
{\displaystyle\sum\limits_{(\mu\nu\rho)}}
T_{\nu\mu}^{\ast\kappa}R_{\beta\alpha\kappa\rho}^{\ast} \label{f7}%
\end{align}
Consequently, among the possible constraints in order to have a positive
answer concerning the question in the head of the Section, the following two
non trivial constraints are derived:

\textbf{(a)} Using Eq.(\ref{f10}) and Eq.(\ref{f6}), it follows that%
\begin{equation}
\epsilon_{\rho\lambda\sigma\kappa}T_{\mu}^{\kappa}-\frac{1}{2}R\epsilon
_{\mu\rho\lambda\sigma}=%
{\displaystyle\sum\limits_{(\mu\alpha\beta)}}
\left( D_{\mu}^{\prime}T_{\rho\alpha\beta}^{\ast}-T_{\mu\beta}^{^{\ast}\kappa
}T_{\rho\kappa\alpha}^{\ast}\right) . \label{f8}%
\end{equation}

\textbf{(b) }Using Eq.(\ref{f5}) and Eq.(\ref{f1}) we must have
\begin{equation}
\epsilon_{\rho\lambda\sigma\kappa}T_{\mu}^{\prime\kappa}-\frac{1}{2}R^{\ast
}\epsilon_{\mu\rho\lambda\sigma}=%
{\displaystyle\sum\limits_{(\mu\alpha\beta)}}
\left( D_{\mu}T_{\rho\alpha\beta}-T_{\mu\beta}^{\kappa}T_{\rho\kappa\alpha
}\right) . \label{f9}%
\end{equation}
Let us analyze what those constraints imply if we start with $(M,%
\slg
,\mathring{D}),$ a Lorentzian structure (part of a Lorentzian spacetime
structure) representing a gravitational field in \textit{GRT}. In this case
the second member of Eq.(\ref{f9}) must equal zero, and taking into account
that $\mathring{R}=\mathring{T}^{\prime}:=\mathring{T}_{\kappa}^{\kappa}$ and
$\mathring{R}^{\ast}=-\mathring{T}^{\prime}:=\mathring{T}_{\kappa}%
^{\prime\kappa}$ we get that the structure $(M,%
\slg
,\mathring{D}^{\prime})$ must also be \textit{torsion free} and the following
constraints must hold:
\begin{align}
\epsilon_{\rho\lambda\sigma\kappa}\mathring{T}_{\mu}^{\kappa}  &  =-\frac
{1}{2}\mathring{T}\epsilon_{\mu\rho\lambda\sigma},\qquad\qquad\quad
\epsilon_{\rho\lambda\sigma\kappa}\mathring{T}_{\mu}^{\prime\kappa} =-\frac
{1}{2}\mathring{T}^{\prime}\epsilon_{\mu\rho\lambda\sigma},\label{f20}\\%
{\displaystyle\sum\limits_{(\mu\nu\rho)}}
\mathring{D}_{\mu}^{\prime}R_{\beta\alpha\nu\rho}^{\ast}  &  =%
{\displaystyle\sum\limits_{(\mu\nu\rho)}}
\mathring{D}_{\mu}R_{\beta\alpha\nu\rho}.\nonumber
\end{align}

\section{A Particular Case}

Suppose we have as postulated\footnote{It is obvious from our previous
considerations that the equation $\left(  R_{\mu\rho\lambda\sigma}^{\star
}+R_{\mu\lambda\sigma\rho}^{\star}+R_{\mu\sigma\rho\lambda}^{\star}\right)
=\epsilon_{\rho\lambda\sigma\kappa}T_{\mu}^{\;\kappa}$ presented in
\cite{argueh} as an identity is in general \textit{wrong} and invalidates most
of the conclusions of that paper. Take also notice that in \cite{argueh} it is
defined a Hodge dual with respect to the first pair of indices. However, since
they start from a Lorentzian structure (where torsion is null) we have the
validity of Eq.(\ref{585c}) and so it does not matter in deriving
Eq.(\ref{f10}) taking the dual with respect to the first or second pair of
indices.} in \cite{argueh} a Riemann-Cartan structure where Eq.(\ref{f10}),
Eq.(\ref{f1}), and Eq.(\ref{f11}) read:%
\begin{align}
\left(  R_{\mu\rho\lambda\sigma}^{\star}+R_{\mu\lambda\sigma\rho}^{\star
}+R_{\mu\sigma\rho\lambda}^{\star}\right)   &  =\epsilon_{\rho\lambda
\sigma\kappa}T_{\mu}^{\;\kappa}\Longleftrightarrow G_{\mu\nu}=T_{\mu\nu
},\label{f10'}\\%
{\displaystyle\sum\limits_{(\mu\alpha\beta)}}
R_{\mu\rho\alpha\beta} &  =\epsilon_{\rho\alpha\beta\kappa}\Theta_{\mu
}^{\kappa},\label{f1'}\\%
{\displaystyle\sum\limits_{(\mu\nu\rho)}}
D_{\mu}R_{\beta\alpha\nu\rho} &  =0.\label{f11'}%
\end{align}
It is obvious that we must then have:%
\begin{equation}
R=0,\qquad\qquad\epsilon_{\rho\alpha\beta\kappa}\Theta_{\mu}^{\kappa}=%
{\displaystyle\sum\limits_{(\mu\alpha\beta)}}
\left(  D_{\mu}T_{\rho\alpha\beta}-T_{\mu\beta}^{\kappa}T_{\rho\kappa\alpha
}\right)  ,\qquad\qquad\quad%
{\displaystyle\sum\limits_{(\mu\nu\rho)}}
T_{\nu\mu}^{\kappa}R_{\beta\alpha\kappa\rho}=0,\label{f31}
\end{equation}
\newline and comparing Eq.(\ref{f9}) with Eq.(\ref{f31}) we get%
\begin{equation}
\epsilon_{\rho\lambda\sigma\kappa}T_{\mu}^{\prime\kappa}+\frac{1}{2}T_{\kappa
}^{\prime\kappa}\epsilon_{\mu\rho\lambda\sigma}=\epsilon_{\rho\alpha
\beta\kappa}\Theta_{\mu}^{\kappa}.\label{f33}%
\end{equation}
So a Riemann-Cartan structure satisfying Eq.(\ref{f10'}), Eq.(\ref{f1'}) and
Eq.(\ref{f11'}) is possible only for matter distributions with $T=T_{\kappa
}^{\kappa}=0$ and which obey very stringent constraints.

Also, \cite{argueh} choose as \textquotedblleft dual
equations\textquotedblright\ the following set:%
\begin{align}
\left(  R_{\mu\rho\lambda\sigma}+R_{\mu\lambda\sigma\rho}+R_{\mu\sigma
\rho\lambda}+\frac{1}{2}R^{\ast}\epsilon_{\mu\rho\lambda\sigma}\right)   &
=\epsilon_{\rho\lambda\sigma\kappa}\Theta_{\mu}^{\kappa}\Longleftrightarrow
G_{\mu\nu}^{\ast}=\Theta_{\mu\nu},\\%
{\displaystyle\sum\limits_{(\mu\alpha\beta)}}
R_{\mu\rho\alpha\beta}^{\ast}  &  =\epsilon_{\rho\alpha\beta\kappa}T_{\mu
}^{\kappa},\\%
{\displaystyle\sum\limits_{(\mu\nu\rho)}}
D_{\mu}^{\prime}R_{\beta\alpha\nu\rho}^{\ast}  &  =0
\end{align}
which, of course must imply
\begin{align}
R^{\ast}  &  =0\text{, }\label{l1}\\
\epsilon_{\rho\alpha\beta\kappa}T_{\mu}^{\kappa}  &  =%
{\displaystyle\sum\limits_{(\mu\alpha\beta)}}
\left(  D_{\mu}^{\prime}T_{\rho\alpha\beta}^{\ast}-T_{\mu\beta}^{^{\ast}%
\kappa}T_{\rho\kappa\alpha}^{\ast}\right)  ,\label{l2}\\%
{\displaystyle\sum\limits_{(\mu\nu\rho)}}
T_{\nu\mu}^{\ast\kappa}R_{\beta\alpha\kappa\rho}^{\ast}  &  =0. \label{l3}%
\end{align}
Comparing Eq.(\ref{l3}) to Eq.(\ref{f8}) implies again that $R=0.$ So we end
with the following constraints, necessary for the validity of the equations
proposed in \cite{argueh}:%
\begin{align}
T_{\mu}^{\prime\kappa}  &  =\Theta_{\mu}^{\kappa},\qquad\qquad\qquad T =
T_{\kappa}^{\kappa}=0,\qquad\quad\Theta=\Theta_{\kappa}^{\kappa}=0,\nonumber\\
\epsilon_{\rho\alpha\beta\kappa}\Theta_{\mu}^{\kappa}  &  =%
{\displaystyle\sum\limits_{(\mu\alpha\beta)}}
\left( D_{\mu}T_{\rho\alpha\beta}-T_{\mu\beta}^{\kappa}T_{\rho\kappa\alpha
}\right) ,\qquad\text{ }\epsilon_{\rho\alpha\beta\kappa}T_{\mu}^{\kappa}=%
{\displaystyle\sum\limits_{(\mu\alpha\beta)}}
\left( D_{\mu}^{\prime}T_{\rho\alpha\beta}^{\ast}-T_{\mu\beta}^{^{\ast}\kappa
}T_{\rho\kappa\alpha}^{\ast}\right)  ,\label{ff}\\%
{\displaystyle\sum\limits_{(\mu\nu\rho)}}
T_{\nu\mu}^{\kappa}R_{\beta\alpha\kappa\rho}  &  =0,%
{\displaystyle\sum\limits_{(\mu\nu\rho)}}
T_{\nu\mu}^{\ast\kappa}R_{\beta\alpha\kappa\rho}^{\ast}=0.\nonumber
\end{align}
Such constraints are clearly violated by the examples in \cite{argueh}.

\section{Is there a metric $%
\slg
^{\prime}$ and a metric connection $D^{\prime}$ such that $\underset{%
\sslg
^{\prime}}{\star}\mathcal{T}^{\alpha}$ and $\underset{%
\sslg
^{\prime}}{\star}\mathcal{R}_{\beta}^{\alpha}$ are their Torsion and Curvature
Forms?}

Now, we can also put the question: in which conditions may we conceive that
$\underset{%
\sslg
}{\star}\mathcal{T}^{\alpha}$ and $\underset{%
\sslg
^{\prime}}{\star}\mathcal{R}_{\beta}^{\alpha}$ are the torsion and curvature
$2$-forms of a $%
\slg
^{\prime}$-metric compatible connection, say $D^{\prime}$ which defines on $M$
the Riemann-Cartan structure $(M,%
\slg
^{\prime},D^{\prime})$ where an Einstein like equation holds, i.e., (with
obvious notation) the validity of the following set of equations ($R^{\prime
}=g^{\prime\mu\beta}R_{\mu\rho\beta}^{\prime^{\rho}}$):
\begin{align}
\underset{%
\sslg
^{\prime}}{\star}\mathcal{G}_{\mu}^{\prime}  &  =\underset{%
\sslg
^{\prime}}{\star}\mathcal{R}_{\mu}^{\prime}-\frac{1}{2}R^{\prime}\underset{%
\sslg
^{\prime}}{\star}\theta_{\mu}=\underset{%
\sslg
^{\prime}}{\star}\mathbf{T}_{\mu}^{\prime}\nonumber\\
\mathbf{D}^{\prime}\mathcal{T}^{\prime\alpha}  &  =\mathcal{R}_{\beta}%
^{\prime\alpha}\wedge\theta^{\beta},\label{f21}\\
\mathbf{D}^{\prime}\mathcal{R}_{\beta}^{\prime\alpha}  &  =0.\nonumber
\end{align}
Since by hypothesis we must have $\underset{%
\sslg
}{\star}\mathcal{T}^{\alpha}=\mathcal{T}^{\prime\alpha}$ and $\underset{%
\sslg
}{\star}\mathcal{R}_{\beta}^{\alpha}=\mathcal{R}_{\beta}^{\prime\alpha},$
calling $R^{\ast}=g^{\prime\mu\beta}R_{\mu\rho\beta}^{\ast^{\rho}}$ \ the set
of Eqs.(\ref{f21}) must be equal to:%
\begin{align}
\underset{%
\sslg
^{\prime}}{\star}\mathcal{G}_{\mu}^{\ast}  &  =\underset{%
\sslg
^{\prime}}{\star}\mathcal{R}_{\mu}^{\ast}-\frac{1}{2}R^{\ast}\underset{%
\sslg
^{\prime}}{\star}\theta_{\mu}=\underset{%
\sslg
^{\prime}}{\star}\mathbf{T}_{\mu}^{\prime}\nonumber\\
\mathbf{D}^{\prime}\underset{%
\sslg
}{\star}\mathcal{T}^{\prime\alpha}  &  =\underset{%
\sslg
}{\star}\mathcal{R}_{\beta}^{\alpha}\wedge\theta^{\beta},\label{f22}\\
\mathbf{D}^{\prime}\underset{%
\sslg
}{\star}\mathcal{R}_{\beta}^{\prime\alpha}  &  =0.\nonumber
\end{align}
which are similar but not identical to the set given by Eq.(\ref{f4}). Due to
their complexity we shall not inspect the nature of those equations solutions,
a problem postponed for another publication.

\begin{remark}
The constraints concerned in this case are more involved than in the previous
case, but we want to emphasize here that if we start with $(M,%
\slg
,\mathring{D}),$ a Lorentzian structure \emph{(}part of a Lorentzian spacetime
structure\emph{) }representing a gravitational field in \textit{GRT}, the
structure $(M,%
\slg
^{\prime},\mathring{D}^{\prime})$ will be also torsion free. Here we recalled
that \cite{dowro} investigated long ago a similar problem (but only in the
linear approximation) and found a positive answer for the question at the head
of this Section.
\end{remark}

\section{Conclusions}

In this paper we present the \textit{correct} constraints that must be
satisfied by any theory (in a $4$-dimensional manifold) that intends to
provide a dual presentation of the gravitational field equations for a general
Riemann-Cartan theory. We compare our results with some ones proposed by
authors quoted in the introduction and present some constructive criticisms.
We hope that since the subject of duality becomes each day more important in
non, e.g., abelian gauge theories, gravity and $M$-theory that our results
shall become appreciated. \appendix

\section{Exterior Covariant Derivative $\mathbf{D}$}

Sometimes Eqs.(\ref{559}) are written by some authors as:
\begin{align}
\mathbf{D}\theta^{\rho}  &  =\mathcal{T}^{\rho},\qquad\qquad\quad
\text{\textquotedblleft\ }\mathbf{D}\omega_{\mu}^{\rho} =\mathcal{R}_{\mu
}^{\rho}.\text{\textquotedblright} \label{559b}%
\end{align}
and $\mathbf{D}: \sec\Lambda T^{\ast}M\rightarrow\sec\Lambda T^{\ast}M$ is
said to be the \textit{exterior covariant derivative} related to the
connection $D$. The second of Eqs.(\ref{559b}) has been printed with quotation
marks due to the fact that it is not a correct equation. Indeed, a
\textit{legitimate} exterior covariant derivative operator\footnote{Sometimes
also called exterior covariant differential.} is a concept that can be defined
for $(p+q)$-indexed $r$-form fields\footnote{Which is not the case of the
connection $1$-forms $\omega_{\beta}^{\alpha}$, despite the name. More
precisely, the $\omega_{\beta}^{\alpha}$ are not true indexed forms, i.e.,
there does not exist a tensor field $\mathbf{\omega}$ such that
$\mathbf{\omega(}e_{i},e_{\beta},\vartheta^{\alpha})=$ $\omega_{\beta}%
^{\alpha}(e_{i}).$} as follows. Suppose that $X\in\sec T_{p}^{r+q}M$ and let $
X_{\nu_{1}\ldots\nu_{q}}^{\mu_{1}\ldots\mu_{p}}\in\sec\Lambda^{r}T^{\ast}M,$
such that for $v_{i}\in\sec TM,$ {$i=0,1,2,\ldots,r$}, then $ X_{\nu_{1}%
\ldots\nu_{q}}^{\mu_{1}\ldots\mu_{p}}(v_{1},\ldots,v_{r})=X(v_{1},\ldots
,v_{r},e_{\nu_{1}},\ldots,e_{\nu_{q}},\theta^{\mu_{1}},\ldots,\theta^{\mu_{p}%
})$. The exterior covariant differential $\mathbf{D}$ of $X_{\nu_{1}\ldots
\nu_{q}}^{\mu_{1}\ldots\mu_{p}}$ on a manifold with a general connection $D$
is the mapping%
\begin{equation}
\mathbf{D:}\sec\Lambda^{r}T^{\ast}M\rightarrow\sec\Lambda^{r+1}T^{\ast
}M\text{, }\qquad0\leq r\leq4, \label{559new2}%
\end{equation}
such that\footnote{As usual the inverted hat over a symbol (in
Eq.(\ref{559new3})) means that the corresponding symbol is missing in the
expression.}%
\begin{align}
&  (r+1)\mathbf{D}X_{\nu_{1}\ldots\nu_{q}}^{\mu_{1}\ldots\mu_{p}}(v_{0}%
,v_{1},\ldots,v_{r})\nonumber\\
&  =\sum\limits_{\nu=0}^{r}(-1)^{\nu}D_{\mathbf{e}_{\nu}}X(v_{0},v_{1}%
,\ldots,\check{v}_{\nu},\ldots,v_{r},%
\sle
_{\nu_{1}},\ldots,%
\sle
_{\nu_{q}},\theta^{\mu_{1}},\ldots,\theta^{\mu_{p}})\nonumber\\
&  \quad-\sum\limits_{0\leq\lambda,\varsigma\,\leq r}(-1)^{\nu+\varsigma
}X(\mathbf{T(}v_{\lambda},v_{\varsigma}),v_{0},v_{1},\ldots,\check{v}%
_{\lambda},\ldots,\check{v}_{\varsigma},\ldots,v_{r},%
\sle
_{\nu_{1}},\ldots,%
\sle
_{\nu_{q}},\theta^{\mu_{1}},\ldots,\theta^{\mu_{p}}). \label{559new3}%
\end{align}
Then, we may verify that
\begin{align}
\hspace{-0.3cm}\mathbf{D}X_{\nu_{1}\ldots\nu_{q}}^{\mu_{1}\ldots\mu_{p}}
=dX_{\nu_{1}\ldots\nu_{q}}^{\mu_{1}\ldots\mu_{p}}+\omega_{\mu_{s}}^{\mu_{1}%
}\wedge X_{\nu_{1}\ldots\nu_{q}}^{\mu_{s}\ldots\mu_{p}}+\cdots+\text{ }%
\omega_{\mu_{s}}^{\mu_{1}}\wedge X_{\nu_{1}\ldots\nu_{q}}^{\mu_{1}\ldots
\mu_{p}} -\omega_{\nu_{1}}^{\nu_{s}}\wedge X_{\nu_{s}\ldots\nu_{q}}^{\mu
_{1}\ldots\mu_{p}}-\cdots-\text{ }\omega_{\mu_{s}}^{\mu_{1}}\wedge X_{\nu
_{1}\ldots\nu_{s}}^{\mu_{1}\ldots\mu_{p}}.\label{559new4}%
\end{align}

\begin{remark}
Note that if \emph{Eq.(\ref{559new4})} is applied on any one of the connection
$1$-forms $\omega_{\nu}^{\mu}$ we would get $\mathbf{D}\omega_{\nu}^{\mu
}=d\omega_{\nu}^{\mu}+\omega_{\alpha}^{\mu}\wedge\omega_{\nu}^{\alpha}%
-\omega_{\nu}^{\alpha}\wedge\omega_{\alpha}^{\mu}$. So, we see that the symbol
$\mathbf{D}\omega_{\nu}^{\mu}$ in \emph{Eq.(\ref{559b})}, supposedly defining
the curvature $2$-forms, is simply wrong, despite this being an equation
printed in many Physics textbooks and many professional articles.
\end{remark}

\subsection{ Properties of $\mathbf{D}$}

The exterior covariant derivative $\mathbf{D}$ satisfy the following properties:

\textbf{(a)} For any $X^{J}$ $\in\sec\Lambda^{r}T^{\ast}M$ and $Y^{K}$
$\in\sec\Lambda^{s}T^{\ast}M$ are sets of indexed forms\footnote{Multi indices
are here represented by $J$ and $K$.}, then%
\begin{equation}
\mathbf{D(}X^{J}\wedge Y^{K}\mathbf{)=D}X^{J}\wedge Y^{K}+(-1)^{rs}X^{J}%
\wedge\mathbf{D}Y^{K}. \label{559new5}%
\end{equation}

\textbf{(b)} For any $X^{\mu_{1}\ldots\mu_{p}}\in\sec\Lambda^{r}T^{\ast}M$
then%
\begin{equation}
\mathbf{DD}X^{\mu_{1}\ldots\mu_{p}}=dX^{\mu_{1}\ldots\mu_{p}}+\mathcal{R}%
_{\mu_{s}}^{\mu_{1}}\wedge X^{\mu_{s}\ldots\mu_{p}}+\cdots+\mathcal{R}%
_{\mu_{s}}^{\mu_{p}}\wedge X^{\mu_{1}\ldots\mu_{s}}. \label{559new6}%
\end{equation}

\textbf{(c)} For any metric-compatible connection $D$ if $g=g_{\mu\nu}%
\theta^{\mu}\otimes\theta^{\nu}$ then, $\mathbf{D}g_{\mu\nu}=0$.

\section{Relation Between the Riemann Curvature Tensors of the Levi-Civita
Connection of $\mathring{%
\slg
}$ and a $%
\slg
$-compatible Riemann-Cartan Connection}

Let $(M,\mathring{%
\slg
},\mathring{D}\mathbf{)}$ and $(M,%
\slg
,D)$ be respectively a Lorentzian and a Riemann-Cartan structure\footnote{Note
that $(M,%
\slg
,{D}\mathbf{)}$ and $(M,\mathring{%
\slg
},\mathring{D})$ are in general \textit{Riemann-Cartan-Weyl} structures. More
general formulas relating two arbitrary general connections may be found,
e.g., in \cite{rodcap2007}.} on the same manifold $M$ such that%
\begin{equation}
\mathring{D}\mathring{%
\slg
}=0,\qquad D%
\slg
=0,
\end{equation}
with the \textit{nonmetricity} of $D$ associated with $\mathring{%
\slg
}$ being given by $ \mathbf{Q:=-}D\mathring{%
\slg
}.$ Let moreover the connection coefficients of $\mathring{D}$ and $D$ in the
arbitrary bases dual bases $\{%
\sle
_{\alpha}\}$ and $\{\theta^{\rho}\}$ for $TU\subset TM$ and $T^{\ast}U\subset
T^{\ast}M$ be:%

\begin{equation}
\mathring{D}_{\partial_{\alpha}}\theta^{\rho}=-\mathring{\Gamma}_{\alpha\beta
}^{\rho}\theta^{\beta}, \qquad\quad D_{\partial_{\alpha}}\theta^{\rho
}=-L_{\alpha\beta}^{\rho}\theta^{\beta},
\end{equation}
and $ Q_{\alpha\beta\sigma}=-D_{\alpha}\mathring{g}_{\beta\sigma}.$ Define the
components of the \textit{strain tensor} of the connection $D$ (associated
with $\mathring{D})$ by
\begin{equation}
S_{\alpha\beta}^{\rho}=(L_{\alpha\beta}^{\rho}+L_{\alpha\beta}^{\rho
})-(\mathring{\Gamma}_{\alpha\beta}^{\rho}+\mathring{\Gamma}_{\alpha\beta
}^{\rho})
\end{equation}
It is trivially established that
\begin{equation}
L_{\alpha\beta}^{\rho}=\mathring{\Gamma}_{\alpha\beta}^{\rho}+\frac{1}%
{2}T_{\alpha\beta}^{\rho}+\frac{1}{2}S_{\alpha\beta}^{\rho}. \label{1041}%
\end{equation}
where $\mathring{\Gamma}_{\alpha\beta}^{\rho}$ are the components of the
Levi-Civita connection of $%
\slg
$ and $T_{\alpha\beta}^{\rho}$ are the components of the torsion tensor of
$D$\footnote{More details may be found, e.g., in \cite{rodcap2007}.}.

Eq.(\ref{1041}) can be used to relate the covariant derivatives with respect
to the connections $\mathring{D}$ and $D$ of any tensor field on the manifold.
In particular, recalling that $\mathring{D}_{\!\alpha}\mathring{g}%
_{\beta\sigma}=$\texttt{ }$%
\sle
_{\alpha}(\mathring{g}_{\beta\sigma})-\mathring{g}_{\mu\sigma}\mathring
{\Gamma}_{\alpha\beta}^{\mu}-\mathring{g}_{\beta\mu}\mathring{\Gamma}%
_{\alpha\sigma}^{\mu}=0$, we get the expression of the nonmetricity tensor of
$D$ in terms of the torsion and the strain, namely,
\begin{equation}
Q_{\alpha\beta\sigma}=\frac{1}{2}(\mathring{g}_{\mu\sigma}T_{\alpha\beta}%
^{\mu}+\mathring{g}_{\beta\mu}T_{\alpha\sigma}^{\mu})+\frac{1}{2}(\mathring
{g}_{\mu\sigma}S_{\alpha\beta}^{\mu}+\mathring{g}_{\beta\mu}S_{\alpha\sigma
}^{\mu}). \label{1077}%
\end{equation}
Eq.(\ref{1077}) can be inverted to yield the expression of the strain in terms
of the torsion and the nonmetricity. We get:
\begin{equation}
S_{\alpha\beta}^{\rho}=\mathring{g}^{\rho\sigma}(Q_{\alpha\beta\sigma
}+Q_{\beta\sigma\alpha}-Q_{\sigma\alpha\beta})-\mathring{g}^{\rho\sigma
}(\mathring{g}_{\beta\mu}T_{\alpha\sigma}^{\mu}+\mathring{g}_{\alpha\mu
}T_{\beta\sigma}^{\mu}). \label{1084}%
\end{equation}
From Eq.(\ref{1077}) and~Eq.(\ref{1084}) it is clear that nonmetricity and
strain can be used interchangeably in the description of the geometry of a
Riemann-Cartan-Weyl space. In particular, we have the relation:
\begin{equation}
Q_{\alpha\beta\sigma}+Q_{\sigma\alpha\beta}+Q_{\beta\sigma\alpha}%
=S_{\alpha\beta\sigma}+S_{\sigma\alpha\beta}+S_{\beta\sigma\alpha}%
,\qquad\text{where $S_{\alpha\beta\sigma}=\mathring{g}_{\rho\sigma}%
S_{\alpha\beta}^{\rho}$.}%
\end{equation}

In order to simplify our next equations, let us introduce the notation:
\begin{equation}
K_{\alpha\beta}^{\!\rho}=L_{\alpha\beta}^{\rho}-\mathring{\Gamma}_{\alpha
\beta}^{\rho}=\frac{1}{2}(T_{\alpha\beta}^{\rho}+S_{\alpha\beta}^{\rho}).
\label{1168}%
\end{equation}
From Eq.(\ref{1084}) it follows that:
\begin{align}
K_{\alpha\beta}^{\!\rho}  &  =-\frac{1}{2}\mathring{g}^{\rho\sigma
}(D_{\!\alpha}\mathring{g}_{\beta\sigma}+D_{\!\beta}\mathring{g}_{\sigma
\alpha}-D_{\!\sigma}\mathring{g}_{\alpha\beta}) -\frac{1}{2}\mathring{g}%
^{\rho\sigma}(\mathring{g}_{\mu\alpha}T_{\sigma\beta}^{\mu}+\mathring{g}%
_{\mu\beta}T_{\sigma\alpha}^{\mu}-\mathring{g}_{\mu\sigma}T_{\alpha\beta}%
^{\mu}). \label{1130}%
\end{align}
Note also that for $D\mathring{%
\slg
}=0$, $K_{\alpha\beta}^{\rho}$ is the so-called \textit{contorsion tensor}.

Returning to Eq.(\ref{1041}), we obtain now the relation between the curvature
tensor $R_{\mu}{}^{\rho}{}_{\!\alpha\beta}$ associated with the connection $D$
and the Riemann curvature tensor $\mathring{R}{}_{\mu}{}^{\rho}{}%
_{\!\alpha\beta}$ of the Levi-Civita connection $D$ associated with the metric
$%
\slg
$. We get, by a straightforward calculation:
\begin{equation}
R_{\mu}{}^{\rho}{}_{\!\alpha\beta}=\mathring{R}_{\mu}{}^{\rho}{}%
_{\!\alpha\beta}+J_{\mu}{}^{\rho}{}_{\![\alpha\beta]}, \label{1070}%
\end{equation}
where:
\begin{equation}
J_{\!\mu}{}^{\rho}{}_{\!\alpha\beta}=\mathring{D}_{\!\alpha}K_{\beta\mu
}^{\!\rho}-K_{\beta\sigma}^{\!\rho}K_{\alpha\mu}^{\!\sigma}=D_{\!\alpha
}K_{\beta\mu}^{\!\rho}-K_{\alpha\sigma}^{\!\rho}K_{\beta\mu}^{\!\sigma
}+K_{\alpha\beta}^{\!\sigma}K_{\sigma\mu}^{\!\rho}. \label{1070a}%
\end{equation}
Multiplying both sides of Eq.(\ref{1070}) by $\frac{1}{2}\theta^{\alpha}%
\wedge\theta^{\beta}$ we get:
\begin{equation}
\mathcal{R}_{\mu}^{\rho}=\mathcal{\mathring{R}}_{\mu}^{\rho}+\mathfrak{J}%
_{\mu}^{\rho},\qquad\text{ where $\mathfrak{J}_{\mu}^{\rho}=\frac{1}%
{2}J_{\!\mu}{}^{\rho}{}_{\![\alpha\beta]}\theta^{\alpha}\wedge\theta^{\beta}%
.$}%
\end{equation}
From Eq.(\ref{1070}) we also get the relation between the Ricci tensors of the
connections $D$ and $\mathring{D}$. The \textit{Ricci tensor} is defined by
\begin{align}
Ricci  &  =R_{\mu\alpha}dx^{\mu}\otimes dx^{\nu},\qquad\text{where}\qquad\quad
R_{\mu\alpha}:=R_{\mu}{}^{\rho}{}_{\!\alpha\rho} \label{ricci}%
\end{align}
Then, we have
\begin{equation}
R_{\mu\alpha}=\mathring{R}_{\mu\alpha}+J_{\mu\alpha}, \label{1174}%
\end{equation}
with
\begin{align}
J_{\mu\alpha}  &  =\mathring{D}_{\alpha}K_{\rho\mu}^{\!\rho}-\mathring
{D}_{\rho}K_{\alpha\mu}^{\!\rho}+K_{\alpha\sigma}^{\!\rho}K_{\rho\mu
}^{\!\sigma}-K_{\rho\sigma}^{\!\rho}K_{\alpha\mu}^{\!\sigma}\nonumber\\
&  =D_{\alpha}K_{\rho\mu}^{\!\rho}-D_{\rho}K_{\alpha\mu}^{\!\rho}%
-K_{\sigma\alpha}^{\!\rho}K_{\rho\mu}^{\!\sigma}+K_{\rho\sigma}^{\!\rho
}K_{\alpha\mu}^{\!\sigma}.
\end{align}
Observe that since the connection $D$ is arbitrary, its Ricci tensor will be
\textit{not} be symmetric in general. Then, since the Ricci tensor
$\mathring{R}_{\mu\alpha}$ of $\mathring{D}$ is necessarily symmetric, we can
split Eq.(\ref{1174}) into:
\begin{equation}%
\begin{tabular}
[c]{c}%
$R_{[\mu\alpha]}=J_{[\mu\alpha]},$ \qquad\quad$R_{(\mu\alpha)}=\mathring
{R}_{(\mu\alpha)}+J_{(\mu\alpha)}.$%
\end{tabular}
\ \ \ \ \ \ \ \label{1190}%
\end{equation}

\section{Some Important Identities}

Let $(M,%
\slg
)$ be a manifold and a Lorentzian metric as defined in Section 1. Let
moreover  $\Lambda^{p}T^{\ast}M$ ($p=0,1,2,3,4$) be the bundle of homogeneous
$p$-form fields and $\Lambda T^{\ast}M=\oplus_{p=0}^{4}\Lambda^{p}T^{\ast}M$
the bundle of non homogeneous forms fields.  We define in $T^{\ast}M$ a metric
field \texttt{g} $\in\sec T_{0}^{2}M$ such that concerning the general bases
$\{%
\sle
_{\mu}\}$ and $\{\theta^{\mu}\}$ introduced in Section 1, if $%
\slg
=g_{\mu\nu}\theta^{\mu}\otimes\theta^{\nu}$ and \texttt{g }$=g^{\mu\nu}%
\sle
_{\mu}\otimes%
\sle
_{\nu}$ then $g^{\mu\alpha}g_{\alpha\nu}=\delta_{\nu}^{\mu}$. In $\Lambda
T^{\ast}M$ we introduce a scalar product
\begin{equation}
\cdot:\Lambda T^{\ast}M\times\Lambda T^{\ast}M\rightarrow\Lambda T^{\ast}M
\label{sp1}%
\end{equation}
such that if $A,B\in\sec\Lambda^{r}T^{\ast}M$ are simple homogeneous $r$-forms
with $A=u_{1}\wedge\cdots\wedge u_{r}$ and $B=v_{1}\wedge\cdots\wedge v_{r}$,
$u_{i},v_{j}\in\sec\Lambda^{1}T^{\ast}M$ then $A\cdot B=\det(\mathtt{g}%
\text{\texttt{(}}u_{i},v_{j}\text{\texttt{)}})$, where $(\mathtt{g}$%
\texttt{(}$u_{i},v_{j}$\texttt{)}$)$ means the matrix with entries
$(\mathtt{g}$\texttt{(}$u_{i},v_{j}$\texttt{)}$)$. This scalar product is then
extended by linearity and orthogonality to all $\Lambda T^{\ast}M$,  and
$A\cdot B=0$ if $A\in\sec\Lambda^{r}T^{\ast}M$, and $B\in\sec\Lambda
^{s}T^{\ast}M$ with $r\neq s$. Also, if $a,b\in\sec\Lambda
^{0}T^{\ast}M$ then $a\cdot b=ab$, the product of functions.

If the metric manifold $(M,%
\slg
)$ is also endowed with an \textit{orientation}, i.e., a {volume }$4${-vector}
$\tau_{\mathring{%
\sslg
}}\in\Lambda^{4}T^{\ast}M$ such that $\tau_{\mathring{%
\sslg
}}\cdot\tau_{\mathring{%
\sslg
}}=-1,$ then a natural isomorphism between sections of $\Lambda^{r}T^{\ast}M$
and $\Lambda^{4-r}T^{\ast}M$ ($r=0,\ldots,4$) can be introduced. The
\textit{Hodge star operator} (or \textit{Hodge dual}) is the linear mapping
$\underset{%
\sslg
}{\star}:\sec\Lambda^{r}T^{\ast}M\rightarrow\sec\Lambda^{4-r}T^{\ast}M$ implicitly defined by 
\begin{equation}
A\wedge\underset{%
\sslg
}{\star}B=(A\cdot B\,)\tau_{\mathring{%
\sslg
}},\label{hodge}%
\end{equation}
for every $A,B\in\Lambda^{r}T^{\ast}M$. Of course, this operator is naturally
extended to an isomorphism $\underset{%
\sslg
}{\star}:\sec\Lambda T^{\ast}M\rightarrow\sec\Lambda T^{\ast}M$ by linearity.
The inverse $\underset{%
\sslg
}{\star}^{-1}:\sec\Lambda^{r}T^{\ast}M\rightarrow\sec\Lambda^{4-n}T^{\ast}M$
of the Hodge star operator is given by $\underset{%
\sslg
}{\star}^{-1}=-(-1)^{r(4-r)}\underset{%
\sslg
}{\star}$. For any $A,B\in\sec\Lambda T^{\ast}M$
\begin{equation}
A\cdot B=\langle\tilde{A}\text{ }B\rangle_{0}=\langle A\text{ }\tilde
{B}\rangle_{0}=B\cdot A,\label{T.55}%
\end{equation}
where $\tilde{A}$ means the reverse of $A$. If $A=u_{1}\wedge\cdots\wedge
u_{r}$ then $\tilde{A}=u_{r}\wedge\cdots\wedge u_{1}$ and $\langle
\;\;\rangle_{0}:\sec\Lambda T^{\ast}M\rightarrow\sec\Lambda^{0}T^{\ast}M$ is
the projection of a general non homogeneous form into the $\Lambda^{0}T^{\ast
}M$ part.

\begin{remark}
Suppose that $\{\varepsilon_{i}\}$ is an orthonormal basis of
$\Lambda^{1}T^{\ast}M$ and $\{\varepsilon^{j}\}$ is reciprocal basis, i.e.,
$\varepsilon_{i}\cdot\varepsilon^{k}=\delta_{i}^{k}$. Then, any $Y\in
\sec\Lambda^{p}T^{\ast}M$ can be written as
\begin{align}
Y  &  =\frac{1}{p!}Y^{j_{1\ldots}}{}^{j_{p}}\varepsilon_{j_{1}}\wedge
\cdots\wedge\varepsilon_{j_{p}} =\frac{1}{p!}Y_{j_{1}}..._{j_{p}}%
\varepsilon^{j_{1}}\wedge\cdots\wedge\varepsilon^{j_{p}}. \label{TEXPANSION0}%
\end{align}
and
\begin{equation}
Y^{j_{1\ldots}}{}^{j_{p}}=Y\cdot(\varepsilon^{j_{1}}\wedge\cdots
\wedge\varepsilon^{j_{p}}),\qquad\qquad\quad Y_{j_{1\ldots}}{}_{j_{p}}%
=Y\cdot(\varepsilon_{j_{1}}\wedge\cdots\wedge\varepsilon_{j_{p}}).
\label{TEXPANSION}%
\end{equation}

\end{remark}

We define the right and left contractions of non homogeneous differential
forms as follows. For arbitrary multiforms $X,Y,Z\in\sec$ $\Lambda T^{\ast}M$,
the left $(\underset{%
\sslg
}{\lrcorner})$ and right $(\underset{%
\sslg
}{\llcorner})$ contractions of $X$ and $Y$ are the mappings $\underset{%
\sslg
}{\lrcorner}:\sec\Lambda T^{\ast}M\times\sec\Lambda T^{\ast}M\rightarrow
\sec\Lambda T^{\ast}M$ and $\underset{%
\sslg
}{\llcorner}:\sec\Lambda T^{\ast}M\times\sec\Lambda T^{\ast}M\rightarrow
\sec\Lambda T^{\ast}M$ such that%
\begin{align}
(X\underset{%
\sslg
}{\lrcorner}Y)\cdot Z  &  =Y\cdot(\tilde{X}\wedge Z),\qquad\qquad
\quad(X\underset{%
\sslg
}{\llcorner}Y)\cdot Z = X\cdot(Z\wedge\tilde{Y}). \label{T49}%
\end{align}
These contracted products $\underset{%
\sslg
}{\lrcorner}$ and $\underset{%
\sslg
}{\llcorner}$ are inner derivations on $\Lambda T^{\ast}M$. Sometimes the
contractions are called interior products. Both contract products satisfy the
left and right distributive laws but they are \emph{not }associative.
Now some important properties of the contractions used in the calculations of
the text are presented:

(i) For any $a,b\in\sec\Lambda^{0}T^{\ast}M,$ and $Y\mathbb{\in}\sec\Lambda
T^{\ast}M$%
\begin{align}
a\underset{%
\sslg
}{\lrcorner}b  &  =a\underset{%
\sslg
}{\llcorner}b=ab\text{ (product of functions),}\quad\quad\quad a\underset{%
\sslg
}{\lrcorner}Y =Y\underset{%
\sslg
}{\llcorner}a=aY\text{ (multiplication by scalars).} \label{T50}%
\end{align}

(ii) If $a,b_{1},\ldots,b_{k}\in\sec\Lambda T^{\ast}M$ then $ a\underset{%
\sslg
}{\lrcorner}(b_{1}\wedge\cdots\wedge b_{k})=\sum\limits_{j=1}^{k}%
(-1)^{j+1}(a\cdot b_{j})b_{1}\wedge\cdots\wedge\check{b}_{j}\wedge\cdots\wedge
b_{k}$, where the symbol $\check{b}_{j}$ means that the $b_{j}$ factor does
not appear in the $j$-term of the sum.

(iii) For any $Y_{j}\in\sec\Lambda^{j}T^{\ast}M$ and $Y_{k}\in\sec\Lambda
^{k}T^{\ast}M$ with $j\leq k$%
\begin{equation}
Y_{j}\underset{%
\sslg
}{\lrcorner}Y_{k}=(-1)^{j(k-j)}Y_{k}\underset{%
\sslg
}{\llcorner}Y_{j}. \label{T51}%
\end{equation}

(iv) For any $Y_{j}\in\sec\Lambda^{j}T^{\ast}M$ and $Y_{k}\in\sec\Lambda
^{k}T^{\ast}M$%
\begin{align}
Y_{j}\underset{%
\sslg
}{\lrcorner}Y_{k}  &  =0,\text{ if }j>k, \qquad\qquad Y_{j}\underset{%
\sslg
}{\llcorner}Y_{k} =0,\text{ if }j<k. \label{T52}%
\end{align}

(v) For any $X_{k},Y_{k}\in\sec\Lambda^{k}T^{\ast}M$, then $ X_{k}\underset{%
\sslg
}{\lrcorner}Y_{k}=Y_{k}\underset{%
\sslg
}{\llcorner}Y_{k}=\tilde{X}_{k}\cdot Y_{k}=X_{k}\cdot\tilde{Y}_{k}.$

(vi) For any $v\in\sec\Lambda^{1}T^{\ast}M$ and $X,Y\in\sec\Lambda T^{\ast}M$,
then $v\underset{%
\sslg
}{\lrcorner}(X\wedge Y)=(v\underset{%
\sslg
}{\lrcorner}X)\wedge Y+\hat{X}\wedge(v\underset{%
\sslg
}{\lrcorner}Y).$ Also, if $A,B\in\sec\Lambda^{k}T^{\ast}M$ then$ A\underset{%
\sslg
}{\lrcorner}(B\underset{%
\sslg
}{\lrcorner}C)=(A\wedge B)\underset{%
\sslg
}{\lrcorner}C,$ and $A\underset{%
\sslg
}{\llcorner}(B\underset{%
\sslg
}{\llcorner}C) =A\underset{%
\sslg
}{\llcorner}(B\wedge C)$.

{(vii)} if $A,B\in\sec\Lambda T^{\ast}M$ then%
\begin{align}
(A\underset{%
\sslg
}{\lrcorner}B)\cdot C  &  =B\cdot(\tilde{A}\wedge C),\qquad\qquad(B%
\sslg
{\underset{%
\sslg
}{\llcorner}}A)\cdot C =B\cdot(C\wedge\tilde{A}). \label{tn954}%
\end{align}
Finally we present some important identities involving contractions and the
Hodge dual. Let $A_{r}\in\sec\Lambda^{r}T^{\ast}M$ and $B_{s}\in\sec
\Lambda^{s}T^{\ast}M$, $r,s\geq0$:
\begin{equation}%
\begin{array}
[c]{l}%
A_{r}\wedge\underset{%
\sslg
}{\star}B_{s}=B_{s}\wedge\underset{%
\sslg
}{\star}A_{r}\quad r=s;\qquad\qquad A_{r}\cdot\underset{%
\sslg
}{\star}B_{s}=B_{s}\cdot\underset{%
\sslg
}{\star}A_{r};\quad r+s=n,\\
A_{r}\wedge\underset{%
\sslg
}{\star}B_{s}=(-1)^{r(s-1)}\underset{%
\sslg
}{\star}(\tilde{A}_{r}\underset{%
\sslg
}{\lrcorner}B_{s});\quad r\leq s,\\
A_{r}\underset{%
\sslg
}{\lrcorner}\underset{%
\sslg
}{\star}B_{s}=(-1)^{rs}\underset{%
\sslg
}{\star}(\tilde{A}_{r}\wedge B_{s});\quad r+s\leq n,\\
\underset{%
\sslg
}{\star}A_{r}=\tilde{A}_{r}\underset{%
\sslg
}{\lrcorner}\tau_{\mathring{%
\sslg
}},\qquad\qquad\underset{%
\sslg
}{\star}\tau_{\mathring{%
\sslg
}}=-1,\qquad\qquad\underset{%
\sslg
}{\star}1=\tau_{\mathring{%
\sslg
}}.
\end{array}
\label{440new}%
\end{equation}

\end{document}